\documentclass[aps,prx,reprint,superscriptaddress]{revtex4-2}

\usepackage{amssymb,amscd, verbatim,amsmath,color, mathrsfs}
\usepackage{amsthm}
\usepackage{graphicx}
\usepackage{turnstile}
\usepackage{float}
\usepackage{hyperref}
\usepackage{cleveref}
\usepackage{appendix}
\usepackage{quantikz}
\usepackage{subcaption}
\usepackage{tikz}
\usepackage{booktabs}

\newtheorem{thm}{Theorem}[section]

\newtheorem{lem}[thm]{Lemma}
\newtheorem{rem}[thm]{Remark}

\newcommand{\GF}{\mathrm{GF}}
\newcommand{\rank}{\mathrm{rank}}

\begin{document}


\title{Structure-Aware Compilation for Scalable Neutral-Atom Quantum Computing}



\author{Dekuan Dong}
\email{dkdong21@m.fudan.edu.cn}
\affiliation{School of Mathematical Sciences, Fudan University}

\author{Fengyu Zou}
\affiliation{School of Mathematical Sciences, Fudan University}

\author{Hengzhun Chen}
\affiliation{School of Mathematical Sciences, Fudan University}

\author{Guorui Zhu}
\affiliation{School of Mathematical Sciences, Fudan University}

\author{Yingzhou Li}
\email{Corresponding author: yingzhouli@fudan.edu.cn}
\affiliation{School of Mathematical Sciences, Fudan University}
\affiliation{Shanghai Key Laboratory for Contemporary Applied Mathematics}


\date{\today}

\begin{abstract}
	We study the compilation of structured quantum gate families on two-dimensional neutral-atom arrays, aiming to reduce addressing and transport overhead under realistic hardware constraints. For single-qubit gates, we exploit the algebraic structures of gate families at the matrix level, enabling efficient rank-one decompositions over appropriate algebraic structures and thereby reducing the number of addressing layers. For controlled-Z (C-Z) gates, we formulate the transport scheduling problem using graph-theoretic models, leading to efficient compilation algorithms under realistic transport constraints. We provide provable performance guarantees for the proposed methods and validate them through extensive numerical experiments. Across representative single-qubit gate families, our methods reduce the number of addressing layers by up to a factor of two compared with naïve row- or column-wise implementations. For C-Z gates, our scheduling strategy reduces the required number of atom transport operations by approximately 50\%. When applied to QAOA circuits for MaxCut, the proposed framework reduces transport cost by more than 30\% on average. These results show that the physical constraints of neutral-atom hardware can be converted into algebraic and graph-theoretic structure, turning a hardware-level scheduling bottleneck into tractable decomposition and coloring problems.
\end{abstract}


\maketitle

\section{Introduction}
Quantum compilation plays a central role in bridging the gap between high-level quantum algorithms and their physical realization on hardware. Over the past decades, substantial progress has been made in compiler design for leading quantum computing platforms, including superconducting circuits \cite{PhysRevA.76.042319,Clarke2008,Arute2019} and trapped ions \cite{PhysRevLett.117.060504,PhysRevLett.128.160504}. For superconducting architectures, compilation efforts primarily focus on qubit mapping and SWAP minimization under fixed, low-degree connectivity graphs, as well as gate scheduling subject to coherence-time  constraints \cite{10.1145/3680291,10755962,9384317}. In contrast, trapped-ion systems benefit from near all-to-all connectivity, shifting the compilation emphasis toward laser-control optimization, motional mode management, and gate parallelism \cite{9138945,bach2025efficientcompilationshuttlingtrappedion}.

The compilation landscape of neutral-atom quantum computers differs fundamentally from that of both superconducting and trapped-ion platforms. By leveraging optical tweezer arrays and Rydberg-mediated interactions, neutral-atom platforms support reconfigurable qubit layouts, long coherence time, and the ability to physically transport atoms during computation \cite{Beugnon2007,Schlosser2011,PhysRevLett.123.170503,Evered2023,Bluvstein_2025}. These features make neutral-atom platforms a promising candidate for scalable quantum computation, enabling native parallel in-place single-qubit operations and flexible implementations of entangling gates through atomic transport. Efficiently transpiling abstract quantum circuits into hardware-compatible instruction sequences on such platforms, however, remains a nontrivial challenge, particularly when achievable parallelism is limited by circuit structure and overhead associated with atomic transport.

In this work, we develop a systematic framework for transpiling quantum circuits on neutral-atom arrays by explicitly exploiting the algebraic structure of the underlying gate sets. For single-qubit gates, including self-inverse gates, Pauli gates, phase gates, and $T$ ($\pi/8$) gates, we reformulate the parallelization problem as a matrix decomposition problem over appropriate algebraic structures. This perspective enables us to derive provable bounds on the number of required operations for each gate family. The resulting optimality guarantees are summarized in \Cref{tab:optimality_summary}.
\begin{table}
	\centering
	\caption{Summary of optimality guarantees for different single-qubit gate families.}
	\label{tab:optimality_summary}
	\begin{ruledtabular}
	\begin{tabular}{lll}
		Gate family & Algebra structure & Optimality \\
		\midrule
		Self-inverse gates & $\mathbb{Z}_2$ & Optimal \\
		Pauli gates & $\left(\mathbb{Z}_2\right)^2$ & $\tfrac{4}{3}$-optimal \\
		Phase gates & $\mathbb{Z}_4$ & $3$-optimal \\
		$\pi/8$ gates & $\mathbb{Z}_8$ & $7$-optimal (conjectured)\\
	\end{tabular}
	\end{ruledtabular}
\end{table}
For two-qubit C-Z gates, we adopt a graph-theoretical formulation that captures the interaction pattern among atoms. By distinguishing between row/column-aligned and non-aligned gates under realistic transport constraints, we design scheduling strategies that significantly reduce the required number of transport operations.


We evaluate the proposed methods through numerical experiments on randomly generated instances. Across all considered single-qubit gate families, our methods reduce the number of addressing layers by up to a factor of two compared with naïve row- or column-wise implementations. For C-Z gates, the proposed scheduling strategy reduces the required number of atom transport operations by approximately 50\%. We further benchmark the complete compilation framework on randomly generated QAOA-MaxCut instances, where it reduces the C-Z transport cost by more than 30\% on average. In addition, small-scale exhaustive tests confirm the theoretical $\tfrac{4}{3}$-optimality for Pauli gates. Runtime benchmarks further show that all compilation procedures complete within one second for the tested array sizes, indicating that the proposed methods are computationally efficient at near-term neutral-atom scales. 

The main contributions of this work are summarized as follows:
\begin{itemize}
	\item We introduce a unified algebraic framework for decomposing and scheduling several families of single-qubit gates on neutral-atom arrays, with provable bounds on the operation count.
	\item We develop efficient scheduling algorithms for C-Z gates under realistic transport constraints, leveraging graph-theoretical tools such as interval graph coloring and the computation of a minimum number of strictly increasing subsequences.
	\item We validate the effectiveness of our approach through numerical simulations, demonstrating substantial reductions in operation counts relative to na\"{i}ve strategies and verifying theoretical guarantees for Pauli gates.
	\item We apply the proposed compilation framework to QAOA circuits for the MaxCut problem and numerically demonstrate significant improvements in compilation efficiency on neutral-atom architectures.
\end{itemize}
Taken together, these results provide practical guidance for compiling quantum algorithms on neutral-atom platforms, helping to bridge the gap between abstract algorithm design and hardware-aware circuit optimization.
\section{Methods}\label{sec:methods}
\subsection{Single-Qubit Gates}
In a neutral-atom quantum computer, single-qubit gates are implemented by driving qubit transitions using externally applied control fields. Under neutral-atom architecture, each single-qubit addressing layer projects multiple horizontal and vertical laser beams onto the atomic array, so that atoms located at the intersections of these beams are simultaneously addressed and undergo identical single-qubit operations. This addressing scheme is illustrated in \Cref{fig:single_gate}.
\begin{figure}
	\centering
	\includegraphics[width=0.8\linewidth]{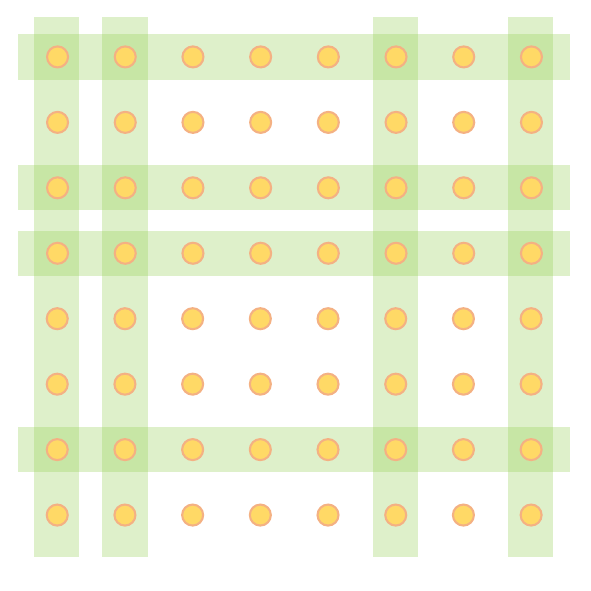}
	\caption{Single-qubit gate addressing. Qubits (yellow circles) are addressed by horizontal and vertical laser beams (semi-transparent green rectangles). Qubits at beam intersections undergo identical single-qubit gates simultaneously.}
	\label{fig:single_gate}
\end{figure}

We focus on several families of single-qubit gates and develop optimized strategies for implementing them on a neutral-atom quantum computer. Consider an $m\times n$ atomic array on which a collection of single-qubit gates from a group $\mathcal G$ is to be applied. This target can be represented by a matrix $M\in \mathcal G^{m\times n}$, where $M_{ij}$ specifies the gate to be applied to the qubit at position $(i, j)$. 

Each single-qubit addressing layer selects a subset of rows and columns and applies a single-qubit gate uniformly to all qubits at their intersections. The goal is to realize the target matrix $M$ with as few such steps as possible. Each step can be expressed as
\[\left(u v^\top\right) \cdot G, \quad G\in \mathcal G, \; u\in \{0, 1\}^m, \; v\in \{0, 1\}^n,\]
where the binary vectors $u$ and $v$ indicate the selected rows and columns, respectively. The outer product $uv^\top$ specifies the subset of addressed qubits, and multiplication by a group element $G$ corresponds to applying $G$ to the selected positions according to
\[0\cdot G = I, \quad 1\cdot G = G, \]
where $I$ denotes the identity element of $\mathcal G$. 

Using this notation, the compilation problem can be formulated as the following optimization problem:
\begin{equation}\label{eq:main_opt}
	\begin{aligned}
		\min \quad & k\\
		\text{s.t.} \quad & M = \sum_{t = 1}^k \left(u_t v_t^\top\right)\cdot G_t,\\
		& G_t\in \mathcal G, \; u_t\in \{0, 1\}^m, \; v_t\in \{0, 1\}^n,
	\end{aligned}
\end{equation}
where the sum is taken element-wise in $\mathcal G^{m\times n}$ using the group operation.
\subsubsection{Self-inverse Gates}\label{sec:self_inverse_gates}
We first consider the case $\mathcal G = \{I, X\}$, where $X$ is a self-inverse single-qubit gate, i.e., $X^2 = I$. Many commonly used single-qubit gates, including the Pauli gates and the Hadamard gates, satisfy this property. In this setting, the group $\mathcal G$ is isomorphic to the finite field $\GF(2)$, and the optimization problem \eqref{eq:main_opt} reduces to 
\[\begin{aligned}
	\min \quad & k\\
	\text{s.t.} \quad & M = \sum_{t = 1}^k u_t v_t^\top,\quad u_t\in \{0, 1\}^m, \; v_t\in \{0, 1\}^n,
\end{aligned}\]
where $M \in \{0, 1\}^{m\times n}$ encodes the target single-qubit gate pattern, and the summation is element-wise in $\GF(2)$.

The reduced problem is equivalent to computing the rank of $M$ over $\GF(2)$. Since $\GF(2)$ is a field, Gaussian elimination can be applied to determine the $\GF(2)$-rank and to explicitly construct a corresponding decomposition. The procedure is as follows:
\begin{enumerate}
	\item Apply a sequence of elementary row operations, represented by an invertible matrix $T_r$, to transform $M$ into reduced row echelon form
	\[T_r M = e_1 v_1^\top + e_2 v_2^\top + \cdots + e_k v_k^\top, \]
	where $e_i$ is the $i$-th column of the $m\times m$ identity matrix, $\{v_1, \dots, v_k\}$ are linearly independent over $\GF(2)$. 
	\item Define $u_t = T_r^{-1} e_t$. Then $M$ admits the decomposition 
	\[M = u_1 v_1^\top + u_2 v_2^\top + \cdots + u_k v_k^\top.\]
\end{enumerate}
The $\GF(2)$-rank of $M$ thus characterizes the minimum number of addressing layers required to realize the target gate pattern under the row-column addressing constraint of the neutral-atom architecture. In contrast, a na\"ive implementation that applies gates sequentially column-by-column or row-by-row would generally require up to $\min\{m, n\}$ steps. This self-inverse scenario provides a baseline, illustrating how algebraic structure can be exploited to optimize the compilation process.
\subsubsection{Pauli Gates}\label{sec:pauli_gates}
We now consider the Pauli group 
\[\left\{\pm I, \pm X, \pm Y, \pm Z, \pm \mathrm{i} I, \pm \mathrm{i} X, \pm \mathrm{i} Y, \pm \mathrm{i} Z\right\}.\]
Since global phases are physically irrelevant, we restrict attention to the phase-free Pauli group $\mathcal G = \{I, X, Y, Z\}$. The group operation, denoted by $+$ and defined by Pauli multiplication modulo global phase, is summarized in the following table:
\begin{table}[H]
	\centering
	\begin{tabular}{c|cccc}
		$+$ &   $I$   &   $X$   &   $Y$   &   $Z$\\ \hline
		$I$ &   $I$   &   $X$   &   $Y$   &   $Z$\\ 
		$X$ &   $X$   &   $I$   &   $Z$   &   $Y$\\ 
		$Y$ &   $Y$   &   $Z$   &   $I$   &   $X$\\ 
		$Z$ &   $Z$   &   $Y$   &   $X$   &   $I$
	\end{tabular}
\end{table}
\noindent Here, the identity $I$ acts as the zero element. The group $\mathcal G$ is Abelian and isomorphic to $(\mathbb Z_2)^2$ via the mapping
\[I \leftrightarrow (0, 0) , \; X \leftrightarrow (0, 1), \; Y \leftrightarrow (1, 0), \; Z \leftrightarrow (1, 1),\]
under which the group operation reduces to component-wise addition modulo $2$, i.e., the bitwise XOR operation.

With this identification, the optimization problem \eqref{eq:main_opt} can be reformulated as
\[\begin{aligned}
	\min \quad & k\\
	\text{s.t.} \quad & M = \sum_{t = 1}^k \left(u_t v_t^\top\right)\cdot G_t,\\
	& G_t\in \{0, 1\}^2, \; u_t\in \{0, 1\}^m, \; v_t\in \{0, 1\}^n,
\end{aligned}\]
which is equivalent to computing the rank and constructing a rank-one decomposition of an $m\times n \times 2$ tensor over $\GF(2)$. For a third-order tensor $\mathcal T = [X_1|X_2|\cdots|X_p]\in \mathbb F^{m\times n \times p}$, where $\mathbb F$ is a given field, the tensor rank is the minimal number $r$ such that each slice $X_i\in \mathbb F^{m\times n}$ can be expressed as a linear combination of $r$ rank-one matrices over $\mathbb F$.

In general, computing the rank of a third-order tensor over a finite field is NP-complete \cite{haastad1989tensor}. Here, the third dimension is fixed to be $2$, for which several special-case results exist \cite{ja1978optimal,song2022maximal,song2025rank}, though the exact complexity remains open.

We present a simple algorithm that produces a rank-one decomposition of an $m\times n \times 2$ tensor over $\GF(2)$ using at most $4/3$ times the optimal number of terms. Let $\mathcal T = [X_1 |X_2]$ be a tensor with two slices. It admits three equivalent decompositions:
\[\begin{aligned}
	\mathcal T &= X_1 \otimes \begin{bmatrix}0\\1\end{bmatrix} + X_2 \otimes \begin{bmatrix}1\\0\end{bmatrix}\\
	&= (X_1 + X_2)\otimes \begin{bmatrix}0\\1\end{bmatrix} + X_2 \otimes \begin{bmatrix}1\\1\end{bmatrix}\\
	&= X_1 \otimes \begin{bmatrix}1\\1\end{bmatrix} + (X_1 + X_2) \otimes \begin{bmatrix}1\\0\end{bmatrix}.  
\end{aligned}\]
Each representation expresses $\mathcal T$ as the sum of two matrix-vector tensor products. Decomposing the matrices into rank-one components gives rank-one expansions with 
\[\begin{aligned}
	&\rank(X_1) + \rank(X_2),\\
	&\rank(X_1 + X_2) + \rank(X_2),\\
	&\rank(X_1) + \rank(X_1 + X_2)
\end{aligned}
\]
terms, respectively. Our algorithm chooses the decomposition with the fewest rank-one terms,
\[\begin{aligned}
	\Delta := &\min \{\rank(X_1) + \rank(X_2), \\
	&\quad \rank(X_1) + \rank(X_1 + X_2), \\
	&\quad \rank(X_2) + \rank(X_1 + X_2)\}.
\end{aligned}\]
Applying Lemma 8 of \cite{song2025rank} (see \Cref{sec:lems&pfs}), 
	\[\rank(\mathcal T) \ge \frac{\rank(X_1) + \rank(X_2) + \rank(X_1 + X_2)}{2},\]
we obtain
\[\begin{aligned}
	\Delta &\le \frac23 \left(\rank(X_1) + \rank(X_2) + \rank(X_1 + X_2)\right)\\
	& \le \frac43 \rank(\mathcal T), 
\end{aligned}\]
establishing that our algorithm produces a rank-one decomposition using at most $4/3$ times the optimal number of terms.
\subsubsection{Phase Gates}\label{sec:phase_gates}
We next consider the cyclic group generated by the phase gate 
\[S = \begin{bmatrix}1& 0\\ 0 & \mathrm{i}\end{bmatrix}, \]
namely $\mathcal G = \{I, S, S^2, S^3\}$. The group has order four and is isomorphic to the additive group modulo $4$, $\mathbb Z_4 = \{0, 1, 2, 3\}$. Under this identification, the optimization problem \eqref{eq:main_opt} becomes
\begin{equation}\label{eq:deco_S}
	\begin{aligned}
		\min \quad & k\\
		\text{s.t.} \quad & M = \sum_{t = 1}^k \left(u_t v_t^\top\right)\cdot G_t,\\
		& G_t\in \mathbb Z_4, \; u_t\in \{0, 1\}^m, \; v_t\in \{0, 1\}^n,
	\end{aligned}
\end{equation}
with all additions performed modulo $4$. We denote the optimal value of $k$ by $r_4(M)$.  

We present a polynomial-time algorithm that produces a decomposition of $M$ using at most $3r_4(M)$ terms. The procedure consists of two stages:
\begin{enumerate}
	\item Let $M_1 = M \bmod 2$. Compute its rank over $\GF(2)$ and obtain a rank-one decomposition
	\[M_1 = \sum_{t=1}^{k_1} u_t^{(1)} \left(v_t^{(1)}\right)^\top,   \]
	where $u_t^{(1)} \in \{0, 1\}^m, \; v_t^{(1)}\in \{0, 1\}^n$.
	\item Define
	\[ M_2 = M - \sum_{t=1}^{k_1} u_t^{(1)} \left(v_t^{(1)}\right)^\top (\bmod 4), \]
	so that $M_2 \in \{0, 2\}^{m\times n}$. Compute the rank of $M_2 / 2$ over $\GF(2)$ and obtain
	\[\frac12 M_2 = \sum_{t=1}^{k_2} u_t^{(2)} \left(v_t^{(2)}\right)^\top,\]
	where $u_t^{(2)} \in \{0, 1\}^m, \; v_t^{(2)}\in \{0, 1\}^n$.
\end{enumerate}
Combining the two stages, we obtain a decomposition
\begin{equation}\label{eq:final_deco}
	\begin{aligned}
		M = \sum_{t=1}^{k_1} u_t^{(1)} \left(v_t^{(1)}\right)^\top + \sum_{t=1}^{k_2}  2 u_t^{(2)} \left(v_t^{(2)}\right)^\top (\bmod 4), 
	\end{aligned} 
\end{equation}
which is valid under the constraints of \eqref{eq:deco_S}.

Since by definition there exists a decomposition of $M$ satisfying the constraints of \eqref{eq:deco_S} using $k = r_4(M)$ terms, reducing each coefficient modulo $2$ produces a valid decomposition of $M_1$ over $\GF(2)$:
\[M_1 = \sum_{t=1}^{r_4(M)}\left(u_t v_t^\top\right) \cdot \left(G_t \bmod 2\right). 
\]
This immediately implies that $k_1 \le r_4(M)$.

To analyze the second stage, it is convenient to introduce an auxiliary quantity
\begin{equation}\label{eq:rho_A}
	\min\left\{k: M = \sum_{t=1}^k u_t v_t^\top \bmod 4, \; u_t\in \mathbb Z_4^m, \; v_t \in \mathbb Z_4^n\right\},
\end{equation}
which is denoted by $\rho_4(M)$. In contrast to $r_4(M)$, it allows multiplication between elements of $\mathbb Z_4$ and thus endows $\mathbb Z_4$ with its standard ring structure. In the following, $\rho_4(M)$ will serve as an intermediate measure to bound the number of rank-one terms produced in the second stage. A relationship between $\rho_4(M)$ and the matrix $\GF(2)$-rank is formalized in the following lemma, whose proof is given in \Cref{sec:lems&pfs}.
\begin{lem}\label{lem:rho_rank}
	Suppose $M\in \{0, 2\}^{m\times n}$ and define $N\in \{0, 1\}^{m\times n}$ by
	\[N_{ij} = \begin{cases}
		1, & M_{ij} = 2,\\
		0, & M_{ij} = 0.
	\end{cases}\]
	Then $\rho_4(M)$ is equal to the rank of $N$ over $\GF(2)$.
\end{lem}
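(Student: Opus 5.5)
We prove the two inequalities $\rho_4(M)\le \rank(N)$ and $\rho_4(M)\ge \rank(N)$ separately. Throughout we use $M = 2N \bmod 4$. We also use the fact that reducing an integer matrix modulo $2$ is a ring homomorphism $\mathbb Z_4\to\GF(2)$. Under it, a matrix that is invertible over $\mathbb Z_4$ (determinant $\pm1$) maps to a matrix invertible over $\GF(2)$.

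\emph{Upper bound.} Let $k=\rank(N)$ over $\GF(2)$, and take the decomposition $N=\sum_{t=1}^k u_t v_t^\top$ over $\GF(2)$ constructed by Gaussian elimination as in \Cref{sec:self_inverse_gates}. Regard $u_t,v_t$ as $0/1$ integer vectors and let $s_{ij}=\sum_t (u_t)_i (v_t)_j\in\mathbb Z$. Then $N_{ij}\equiv s_{ij}\pmod 2$, and hence $2s_{ij}\equiv 2N_{ij}=M_{ij}\pmod 4$. Therefore
\[M=\sum_{t=1}^k (2u_t)v_t^\top \bmod 4,\]
with $2u_t\in\mathbb Z_4^m$ and $v_t\in\mathbb Z_4^n$. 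This gives $\rho_4(M)\le k$.

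\emph{Lower bound.} Suppose $M = UV^\top \bmod 4$ with $U\in\mathbb Z_4^{m\times k}$ and $V\in\mathbb Z_4^{n\times k}$; we want $k\ge \rank(N)$. The key tool is the Smith normal form over $\mathbb Z_4$. Since $\mathbb Z_4$ is a local principal ideal ring, row and column operations suffice to reduce $U$, so there are invertible $P\in\mathbb Z_4^{m\times m}$ and $Q\in\mathbb Z_4^{k\times k}$ with $PUQ=D$. Here $D$ is ``diagonal'' with $a$ entries equal to $1$, then $b$ entries equal to $2$, and zeros elsewhere, so $a+b\le k$. The reduction works as follows: pick an entry that is a unit if one exists and clear its row and column; once no unit remains, every entry lies in $2\mathbb Z_4$ and we repeat with an entry equal to $2$.

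Set $W=VQ^{-\top}$. Then $PM = D W^\top$, so row $i$ of $PM$ equals $w_i^\top$ for $i\le a$, equals $2w_i^\top$ for $a<i\le a+b$, and is zero for $i>a+b$. On the other hand $PM = 2PN \bmod 4$, and every entry of $PM$ lies in $2\mathbb Z_4$. Writing $PM=2R$ with $R\in\{0,1\}^{m\times m'}$, we get $R\equiv PN \pmod 2$. By the row description above, $R$ has at most $a+b$ nonzero rows, so $\rank_{\GF(2)}(R)\le a+b$. Since $\bar P:=P\bmod 2$ is invertible over $\GF(2)$,
\[\rank(N)=\rank(\bar P N)=\rank(R)\le a+b\le k.\]

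The main obstacle is the lower bound. The naive route of reducing $UV^\top$ modulo $2$ only shows that $\bar U\bar V^\top=0$, which loses all information about $N$. The Smith normal form step separates the ``unit'' part from the ``$2$'' part of $U$ and lets the factor $2$ be cancelled cleanly. I would present the existence of this normal form over $\mathbb Z_4$ explicitly, via the pivoting argument sketched above, rather than cite it, since $\mathbb Z_4$ is not a domain.
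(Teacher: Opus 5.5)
Your proof is correct, but your lower bound follows a different route from the paper's.

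\textbf{The paper's route.} It normalizes the \emph{target}. It lifts a $\GF(2)$ reduction $SNT=\mathrm{diag}(I_r,0)$ to $\mathbb Z_4$ and obtains $SMT=\mathrm{diag}(2I_r,0)$. It then uses two facts: $\rho_4$ is invariant under invertible transformations, and $\rho_4$ can only decrease on passing to a submatrix. Finally it proves $\rho_4(2I_r)=r$ by induction. The inductive step runs as follows: since $\sum_t u_{t,0}v_{t,0}=2$ and $2\cdot 2=0$ in $\mathbb Z_4$, some $u_{t_0,0}$ or $v_{t_0,0}$ must be a unit. That unit lets one eliminate $\widehat v_{t_0}$ and drop a term.

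\textbf{Your route.} You instead normalize the \emph{factor} $U$ of an arbitrary decomposition, so that $PM$ has at most $k$ nonzero rows. You then cancel the $2$ and reduce modulo $2$, which gives $\rank(N)=\rank(\bar PN)\le k$ in one step.

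\textbf{Comparison.} The paper needs only $\GF(2)$ elimination plus an elementary induction. You need a normal form over the non-domain $\mathbb Z_4$, and your pivoting sketch is adequate for that. In exchange, you avoid the induction, and your argument generalizes more readily. The same argument gives $\rho_{2^e}(2^{e-1}N)=\rank_{\GF(2)}(N)$ over $\mathbb Z_{2^e}$. The paper's unit-existence step breaks for $e\ge 3$: in $\mathbb Z_8$, $2\cdot 2=4$, so $\sum_t u_{t,0}v_{t,0}=4$ need not involve a unit.

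You also never use the split of $D$ into $a$ ones and $b$ twos. You only use that $PU$ has at most $k$ nonzero rows, so a row-only (Hermite-type) reduction of $U$ would suffice.

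Minor slips: $R$ should be $m\times n$, and $w_i$ should be identified as the $i$-th column of $W$.
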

Applying \Cref{lem:rho_rank} to $M_2$ gives $k_2 = \rho_4(M_2)$. By construction, $M_2$ can be expressed as 
\[M_2 = \sum_{t=1}^{r_4(M)} (u_t v_t^\top) \cdot G_t - \sum_{t=1}^{k_1} u_t^{(1)}\left(v_t^{(1)}\right), \]
which is a valid decomposition under the definition of $\rho_4(\cdot)$. It then follows that 
\[k_2 = \rho_4(M_2) \le r_4(M) + k_1 \le 2 r_4(M). \]
so that the total number of terms in the final decomposition \eqref{eq:final_deco} satisfies
\[k_1 + k_2 \le 3 r_4(M).\] 
\begin{rem}
	The vectors $u_t^{(1)}, v_t^{(1)}$ obtained in the first stage provide some flexibility in the second stage. Specifically, one may alternatively define
	\[M_2 = M - \sum_{t=1}^{k_1} \left(u_t^{(1)} \left(v_t^{(1)}\right)^\top\right) \cdot G_t \]
	for any choice $G_t \in \{1, 3\}$, and in practice the coefficients $G_t$ can be selected to minimize $k_2$. 
\end{rem}
\subsubsection{Clifford Gates}
We now consider the single-qubit Clifford group generated by $\{I, X, Y, Z, S, H\}$. Up to an irrelevant global phase, any single-qubit Clifford operator admits a decomposition of the form
\[U = PH^a S^b\]
where $P\in \{I, X, Y, Z\}$, $a\in \{0, 1\}$, and $b\in \{0, 1, 2, 3\}$. 

This factorization suggests a natural compilation strategy for Clifford gates on a neutral-atom array. In particular, the Clifford compilation problem can be decomposed into three independent subproblems corresponding to the implementation of phase gates, Hadamard gates, and Pauli gates. Concretely, we first realize the required phase-gate pattern using the algorithm developed in \Cref{sec:phase_gates}. We then implement the Hadamard gates using the method for self-inverse gates described in \Cref{sec:self_inverse_gates}. Finally, the remaining Pauli gates are synthesized using the approach presented in \Cref{sec:pauli_gates}. 

The modular decomposition enables the Clifford compilation problem to be addressed by combining specialized algorithms for each gate family, each with provable performance guarantees. As a result, the overall compilation procedure is efficient, scalable, and well suited to the row-column addressing constraints of neutral-atom architectures. 
\subsubsection{$\pi / 8$ Gates}
In this subsection, we consider the group generated by the $\frac\pi 8$ gate
\[T = \begin{bmatrix}
	1&0\\ 0 & e^{\mathrm{i} \frac\pi 4}
\end{bmatrix}.\]
Analogous to the phase-gate group, this group is cyclic of order eight and is isomorphic to the additive group modulo $8$, $\mathbb Z_8 = \{0, 1, 2, 3, 4, 5, 6, 7\}$. Under this identification, the compilation problem \eqref{eq:main_opt} reduces to 
\begin{equation}\label{eq:Z_8}
	\begin{aligned}
		\min \quad & k\\
		\text{s.t.} \quad & M = \sum_{t = 1}^k \left(u_t v_t^\top\right)\cdot G_t,\\
		& G_t\in \mathbb Z_8, \; u_t\in \{0, 1\}^m, \; v_t\in \{0, 1\}^n,
	\end{aligned}
\end{equation} 
where all additions are performed modulo $8$. We denote the optimal value of $k$ by $r_8(M)$.

As in the phase-gate case, we adopt a two-stage decomposition strategy:
\begin{enumerate}
	\item Let $M_1 = M \bmod 2$. Compute its rank over $\GF(2)$ and obtain a rank-one decomposition
	\[M_1 = \sum_{t=1}^{k_1} u_t^{(1)} \left(v_t^{(1)}\right)^\top, \]
	where $u_t^{(1)} \in \{0, 1\}^m, \; v_t^{(1)}\in \{0, 1\}^n$.
	\item Define
	\[ M_2 = M - \sum_{t = 1}^{k_1} u_t^{(1)} \left(v_t^{(1)}\right)^\top (\bmod 8), \]
	so that $M_2 \in \{0, 2, 4, 6\}^{m\times n}$. Apply the algorithm for compiling phase gates and obtain
	\[\frac12 M_2 = \sum_{t = 1}^{k_2} \left(u_t^{(2)} \left(v_t^{(2)}\right)^\top\right) \cdot G_t, \]
	where $G_t \in \mathbb Z_4, \;u_t^{(2)} \in \{0, 1\}^m, \; v_t^{(2)}\in \{0, 1\}^n$.
\end{enumerate}
Combining the two stages, we obtain a valid decomposition
\[M = \sum_{t = 1}^{k_1} u_t^{(1)} \left(v_t^{(1)}\right)^\top + \sum_{t = 1}^{k_2} \left(2G_t\right)u_t^{(2)} \left(v_t^{(2)}\right)^\top (\bmod 8).\]
As in the phase gate case, reducing an optimal $\mathbb Z_8$ decomposition modulo $2$ implies $k_1 \le r_8(M)$. Moreover, applying the phase-gate algorithm to $M_2 / 2$ yields $k_2 \le 3 r_4(M_2 / 2)$. We conjecture that $r_4(M_2 / 2) = r_8(M_2)$, which implies 
\[r_4(M_2 / 2) = r_8(M_2) \le r_8(M) + k_1 \le 2 r_8(M).\]
Under this conjecture, the total number of terms produced by the algorithm satisfies $k_1 + k_2 \le 7 r_8(M)$.
\begin{rem}
	The recursive structure of the above algorithm extends naturally to the cyclic group generated by a $\pi/2^{p}$ rotation,
	\[
	R_p=
	\begin{bmatrix}
		1&0\\
		0&e^{i\pi/2^{p}}
	\end{bmatrix},
	\]
	which is isomorphic to $\mathbb Z_{2^{p+1}}$. Under this identification, the compilation problem reduces to an optimization problem analogous to \eqref{eq:Z_8}, with the coefficients satisfying $G_t\in \mathbb Z_{2^{p+1}}$. One can then first reduce the target matrix modulo $2$ and compute a rank-one decomposition over $\mathrm{GF}(2)$, as in the first stage of the phase- and $\pi/8$-gate algorithms. After subtracting this binary decomposition, all entries of the residual matrix become even, so the residual can be divided by two and regarded as an instance over $\mathbb Z_{2^{p}}$. Repeating this procedure recursively reduces the original compilation problem to a sequence of binary matrix decompositions and lower-order cyclic-group compilation problems.
\end{rem}

\subsection{C-Z Gates}
Unlike single-qubit gates, the execution of C-Z gates requires the participating qubits to be physically transported to a designated entangling zone. Within this zone, C-Z gates can be performed between adjacent rows or columns of qubits. An acousto-optic deflector (AOD) is used to extract multiple qubits from the atom array and transport them into the entangling zone. In a single transport, several rows and columns of the atom array are selected, and all atoms located at their intersections are moved simultaneously. During the transport, the relative positions of the selected atoms are preserved \cite{zhu2025quantumcompilerdesignqubit}. The transportation process is illustrated in \Cref{fig:transport}.

\begin{figure}[H]
	\centering
	\includegraphics[width=1.0\linewidth]{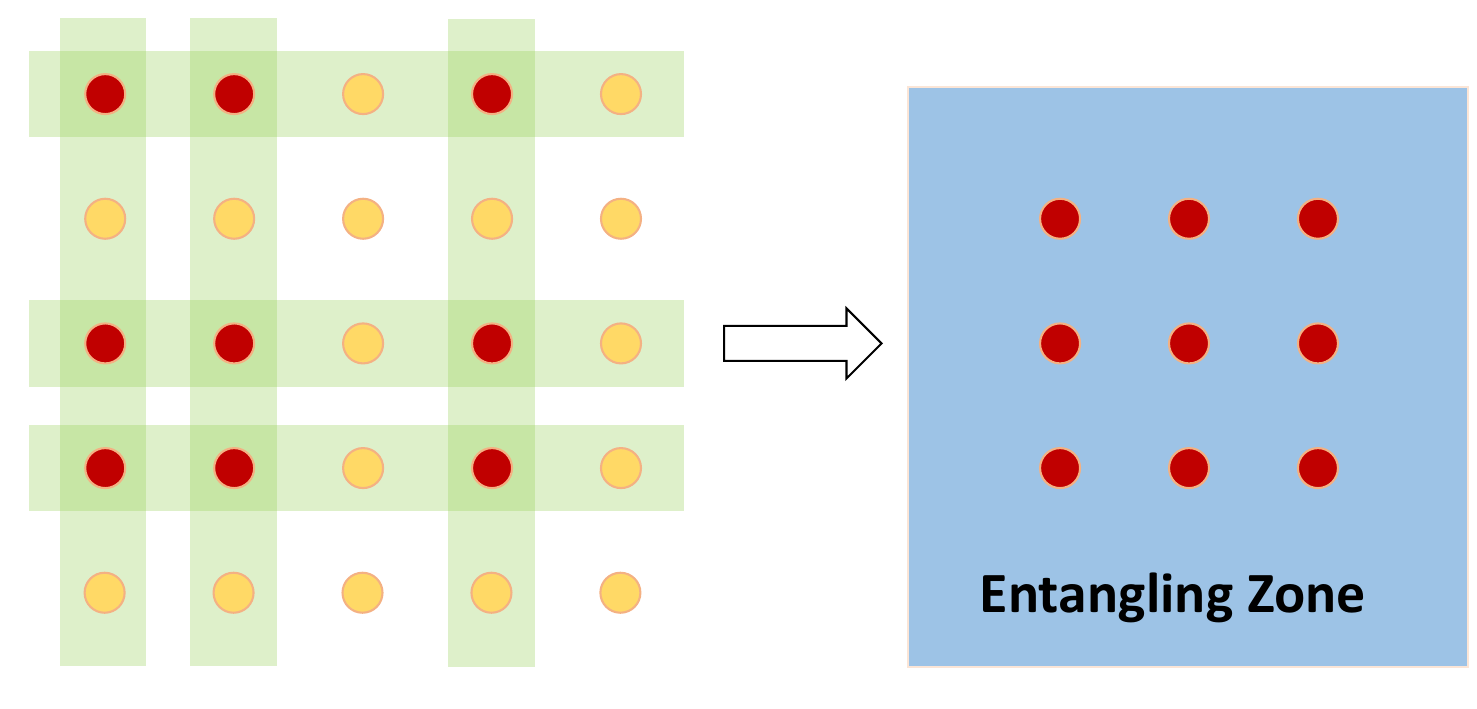}
	\caption{Illustration of qubit transport. In the original neutral-atom array, several qubits (highlighted with darker colors) are selected by AOD and transported to the entangling zone. }
	\label{fig:transport}
\end{figure}

In the entangling zone, C-Z gates are implemented by bringing two adjacent rows (or columns) of atoms sufficiently close such that their separation is smaller than the characteristic Rydberg blockade radius. In this regime, corresponding atoms in the two rows (or columns) can undergo simultaneous C-Z gate operations. Consequently, the following constraints must be satisfied when performing C-Z gates in the entangling zone:
\begin{itemize}
	\item C-Z gates can only be applied between vertically or horizontally adjacent qubits.
	\item C-Z gates must be executed simultaneously on all corresponding atom pairs between two adjacent rows (or columns), rather than on only a subset of those pairs. 
\end{itemize}

Under the transportation and entangling constraints, a C-Z gate between two qubits located in different rows and different columns cannot be implemented within a single transport operation. To realize such C-Z gates, atoms selected from two rows are transported to the entangling zone in two separate steps. The process is illustrated in \Cref{fig:cz-gates}. 

\begin{figure}[H]
	\centering
	\includegraphics[width=1.0\linewidth]{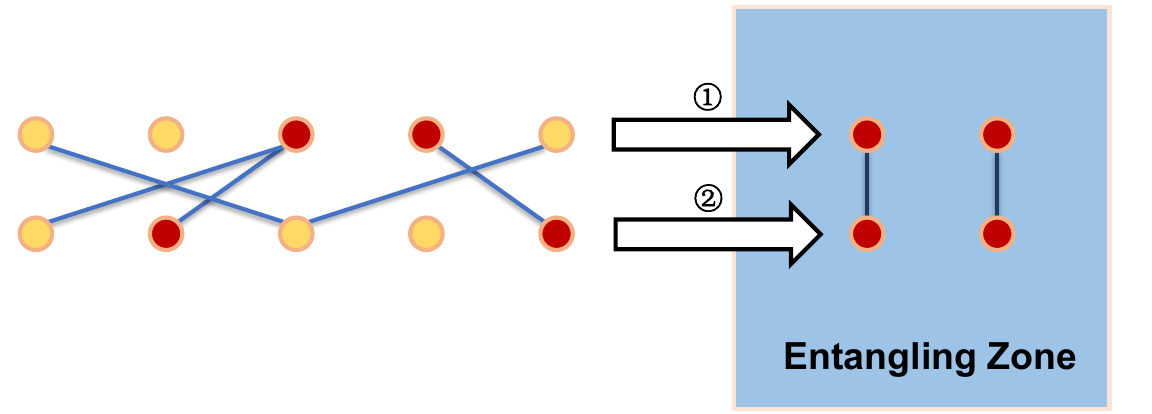}
	\caption{Transporting selected rows of qubits in two steps. After the transport operations, C-Z gates between qubits located in different rows and columns become column-aligned in the entangling zone, enabling their implementation. }
	\label{fig:cz-gates}
\end{figure}

Because qubit transport is both the most time-consuming operation and a dominant source of error in neutral-atom architectures, an essential goal of C-Z gate compilation is to realize a given set of target C-Z gates using as few transport operations as possible.

Under the assumptions described above, C-Z gates naturally divide into two categories: row/column-aligned C-Z gates, which act on pairs of qubits within the same row or the same column, and non-aligned C-Z gates, which couple qubits located in different rows and different columns. We analyze these two cases separately and develop corresponding compilation strategies in the following subsections. By combining these strategies, we ultimately obtain a unified compilation framework capable of realizing arbitrary patterns of C-Z gates on a neutral-atom quantum computer. 
\subsubsection{Row/column-aligned C-Z Gates}\label{sec:intra_CZ}
We consider the compilation problem for row-aligned C-Z gates, namely, implementing a set of C-Z gates in which each gate acts on a pair of qubits within the same row of the atom array, while minimizing the number of transport operations. The compilation problem for column-aligned C-Z gates can be treated analogously. 

We first consider a simpler setting in which all C-Z gates act on qubits within a single row. Suppose the row contains $n$ qubits, labeled $0, 1, \dots, n-1$ from left to right according to their spatial order in the array. Each C-Z gate is uniquely specified by an ordered pair $(a, b)$ with $0\le a < b \le n-1$, indicating that the gate acts on qubits $a$ and $b$. Because the relative ordering of qubits is preserved during transport, two C-Z gates $(a_1, b_1)$ and $(a_2, b_2)$ can be executed within a single transport operation if and only if their corresponding intervals do not overlap, namely
\[(a_1, b_1) \cap (a_2, b_2) = \emptyset, \]
where each gate is identified with the open interval $(a, b)$ on the real line. Under this interpretation, the physical scheduling problem reduces to a purely combinatorial one. Given a set of intervals $\left\{(a_i, b_i)\right\}$, the goal is to partition them into the minimum number of groups such that no two intervals within the same group intersect. From a graph-theoretic perspective, this is precisely the graph-coloring problem on an interval graph, whose vertices correspond to the intervals $(a_i, b_i)$, with edges connecting pairs of overlapping intervals. Since interval graphs are perfect \cite{golumbic2004algorithmic}, their chromatic number, namely, the minimum number of colors required so that adjacent vertices receive different colors, can be computed in polynomial time by sorting the intervals and greedily assigning colors. The resulting coloring directly yields an optimal transport schedule for the corresponding set of C–Z gates.

When the set of row-aligned C-Z gates involves $m$ different rows, a straightforward approach is to implement the gates row by row using the strategy described in the previous paragraph. Here, we present an alternative method that exploits the structure of qubit transport and the self-inverse property of C-Z gates.

Specifically, the row-aligned C-Z gates acting on an $m\times n$ atom array can be represented by an $m\times n(n-1)/2$ binary matrix $M$. The columns of $M$ are indexed by ordered pairs $(a, b)$ with $0\le a < b \le n-1$. The entry $M_{i, (a, b)}$ equals $1$ (respectively, $0$) if a C-Z gate is (respectively, is not) applied between qubits $a$ and $b$ in the $i$-th row. An illustration example is shown in \Cref{fig:to_matrix}.
\begin{figure}[H]
	\centering
	\includegraphics[width=1.0\linewidth]{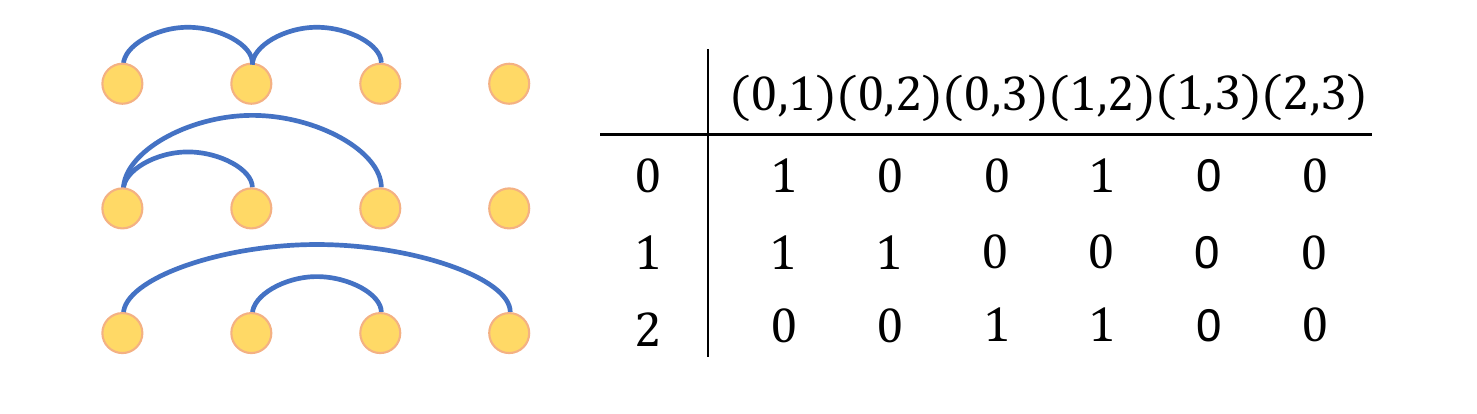}
	\caption{Encoding a set of C-Z gates into a binary matrix representation. }
	\label{fig:to_matrix}
\end{figure}
Then, the matrix is partitioned according to its column indexes, such that any two indexes (interpreted as intervals) within the same group are disjoint. This grouping can be obtained in two steps:
\begin{enumerate}
	\item Collect the indexes of all nonzero columns.
	\item Interpret these column indexes as open intervals and apply interval-graph coloring to obtain a valid grouping.
\end{enumerate}
For example, after removing all-zero columns, the matrix in \Cref{fig:to_matrix} can be grouped as follows:
\[\begin{array}{c||c|c||c||c} 
	\quad \quad & (0,1) & (1,2) & (0,2) & (0,3) \\
	\hline 0 & 1 & 1 & 0 & 0 \\
	\hline 1 & 1 & 0 & 1 & 0 \\
	\hline 2 & 0 & 1 & 0 & 1
\end{array}\]
Note that within each group, any collection of C-Z gates corresponding to a rank-$1$ pattern can be implemented using a single transport operation. Moreover, since the C-Z gate is self-inverse, Gaussian elimination can be applied to determine the minimum number of transport operations required to realize the desired gate set within each group. Combining the implementations for all groups then yields a complete procedure for executing the full set of C-Z gates. 

In the above example, the first group of columns has rank two, whereas the remaining two groups each have rank one. Consequently, the entire set of C-Z gates can be implemented using a total of $4$ transport operations. 
\begin{rem}
	We now present a unified perspective that encompasses both the row-by-row method and the grouping-and-elimination method. Specifically, the row indexes can first be partitioned into $k$ groups, where the $i$-th group contains $m_i$ rows. The grouping-and-elimination procedure is then applied independently to each group.
	
	From this viewpoint, the row-by-row method corresponds to the special case in which all \(m_i=1\), whereas the grouping-and-elimination method corresponds to the case \(k=1\). In practice, heuristic strategies can be developed to optimize the grouping of both rows and columns of the matrix \(M\), thereby reducing the overall transport cost.   
\end{rem}
\subsubsection{Non-aligned C-Z Gates}\label{sec:inter_CZ}
We consider the problem of implementing non-aligned C-Z gates, namely, C-Z gates acting on pairs of qubits located in different rows and columns of the atom array, with the goal of minimizing the number of transport operations. 

We first consider the case of two rows of qubits, each containing $n$ qubits labeled $1, 2, \dots, n$ from left to right. An inter-row C-Z gate is uniquely specified by a pair $(a_1, a_2)$, indicating that the gate acts on qubit $a_1$ in the first row and qubit $a_2$ in the second row. Since the relative ordering of qubits within each row is preserved during transport, two inter-row C-Z gates $(a_1, a_2)$ and $(b_1, b_2)$ can be executed within a single transport operation if and only if their induced connections do not cross, i.e., 
\[(a_1 - b_1)(a_2 - b_2) > 0. \] 
Under this condition, the scheduling problem admits a combinatorial reformulation. Given a set of pairs $S \subset \mathbb Z_+^2$, the task is to find the smallest integer $k$ together with a partition $\{S_1, \dots , S_k\}$ of $S$ such that 
\[(a_1 - b_1)(a_2 - b_2) > 0, \; \forall (a_i, b_i)\in S_j, \;i = 1, 2,\; j = 1, \dots, k.\]
Let $S = \{a^{(1)}, \dots, a^{(m)}\}$ with $a^{(i)} = \left(a_1^{(i)}, a_2^{(i)}\right)$. We sort these points by non-decreasing order of $a_1^{(i)}$, breaking ties by decreasing $a_2^{(i)}$, and define the resulting sequence
\[x_i = a_2^{(i)}, \quad i = 1, \cdots, m.\]
Under this ordering, each subset $S_j$ corresponds to a strictly increasing subsequence of $(x_{1}, \dots, x_{m})$. Consequently, the minimum number of transport operations is equal to the minimum number of strictly increasing subsequences required to partition the sequence $(x_1, \dots, x_m)$. This quantity can be computed optimally using a greedy algorithm with time complexity $\mathcal O(m\log m)$.

Consequently, a straightforward strategy for compiling non-aligned C-Z gates is to process the gates row pair by row pair. Specifically, for each pair of rows, all required inter-row C-Z gates are implemented using the method described above.
\subsubsection{General C-Z Gates Scheduling}
By combining the two cases discussed above, arbitrary C-Z gates on a neutral-atom array can be scheduled efficiently. 

First, all non-aligned C-Z gates are scheduled row pair by row pair using the method described in \Cref{sec:inter_CZ}. During this process, some row- or column-aligned C-Z gates may also be executed without requiring additional transport operations. The remaining row-aligned and column-aligned gates are then scheduled separately using the approach developed in \Cref{sec:intra_CZ}. 

\section{Application: Compilation of QAOA-MaxCut}
The Quantum Approximation Optimization Algorithm (QAOA) circuit for the MaxCut problem is a $p$-layer parameterized quantum-classical hybrid circuit. Each layer consists of a fixed gate sequence, with rotation angles determined by a parameter pair $(\gamma, \beta)$ that governs the controlled-phase ($Z_\gamma$) and $R_x$ gates. An example graph and the corresponding single-layer QAOA-MaxCut circuit are shown in  \Cref{fig:eg_QAOA}.
\begin{figure*}
	\centering
	\begin{subfigure}[c]{0.45\textwidth}
		\centering
		\begin{tikzpicture}[
			scale=1.2,
			every node/.style={
				draw,
				circle,
				fill=white,
				minimum size=6mm,
				inner sep=1pt
			}
			]
			\node (3) at (0.0,1.7) {3};
			\node (4) at (1.7,1.5) {4};
			\node (0) at (3.1,2.4) {0};
			\node (1) at (3.1,0.8) {1};
			\node (2) at (4.4,-0.5) {2};
			
			\draw (0) -- (1);
			\draw (0) -- (4);
			\draw (1) -- (2);
			\draw (1) -- (4);
			\draw (3) -- (4);
		\end{tikzpicture}
	\end{subfigure}
	\begin{subfigure}[c]{0.5\textwidth}
		\centering
		\begin{quantikz}[
			row sep=0.25cm,
			scale=0.9, transform shape]
			\lstick{$q_0$} & \gate{H} & \gate{Z_\gamma} & \qw          & \qw          & \qw          & \gate{Z_\gamma}& \gate{R_x(\beta)} & \qw\\
			\lstick{$q_1$} & \gate{H} & \ctrl{-1}       & \gate{Z_\gamma} & \gate{Z_\gamma} & \qw     & \qw     & \gate{R_x(\beta)} & \qw \\
			\lstick{$q_2$} & \gate{H} & \qw             & \ctrl{-1}       & \qw          & \qw        & \qw  & \gate{R_x(\beta)} & \qw \\
			\lstick{$q_3$} & \gate{H} & \qw             & \qw          & \qw          & \gate{Z_\gamma}& \qw & \gate{R_x(\beta)} & \qw \\
			\lstick{$q_4$} & \gate{H} & \qw             & \qw          & \ctrl{-3}       & \ctrl{-1}  & \ctrl{-4}     & \gate{R_x(\beta)} & \qw
		\end{quantikz}
	\end{subfigure}
	\caption{Example graph and QAOA-MaxCut circuit for a single layer.}
	\label{fig:eg_QAOA}
\end{figure*}

Here, the controlled-$Z_\gamma$ gate is defined as
\[\begin{quantikz}
	&\gate{Z_\gamma}&\qw\\&\ctrl{-1}&\qw
\end{quantikz} : = \begin{quantikz}
	&\targ{X} & \gate{R_z(\gamma)} & \targ{X} & \qw \\
	& \ctrl{-1} & \qw & \ctrl{-1}& \qw
\end{quantikz} \]
Each controlled-$Z_\gamma$ gate is symmetric on the two target qubits and commutes with all other controlled-$Z_\gamma$ gates, allowing flexible rearrangement when multiple gates share qubits. On neutral-atom quantum computers, the gate can be further decomposed into a sequence of native C-Z and single-qubit gates:
\[\begin{quantikz}[row sep=0.25cm,column sep=0.3cm]
	& \gate{H}& \gate{Z} & \gate{H} &\gate{R_z(\gamma)} & \gate{H} & \gate{Z} & \gate{H} &\qw \\
	& \qw & \ctrl{-1} & \qw & \qw & \qw &  \ctrl{-1}& \qw &\qw
\end{quantikz}\]
Consider two controlled-$Z_\gamma$ with one common target qubit, the circuit on these three qubits can be arranged as 
\[\begin{quantikz}[row sep=0.25cm,column sep=0.28cm]
	& \gate{H}& \gate{Z} & \qw & \gate{H} &\gate{R_z(\gamma)} & \gate{H} & \gate{Z} & \qw & \gate{H} &\qw \\
	& \qw & \ctrl{-1} & \ctrl{1} & \qw & \qw & \qw &  \ctrl{-1}& \ctrl{1} & \qw &\qw\\
	& \gate{H}& \qw & \gate{Z} & \gate{H} &\gate{R_z(\gamma)} & \gate{H} &\qw & \gate{Z} & \gate{H} &\qw \\
\end{quantikz}\]
where we use the facts that controlled-$Z_\gamma$ gate is symmetric on the two target qubits and C-Z gates commute. More generally, if a set of controlled-$Z_\gamma$ gates corresponds to a graph with paths no longer than $1$, i.e., a star forest in graph theory, then the corresponding quantum circuit can be arranged in a layered structure as
\begin{equation}\label{eq:sf}
	\begin{quantikz}[row sep=0.25cm,column sep=0.3cm]
		&\gate{\mathcal H} & \gate{\mathcal Z} & \gate{\mathcal H} & \gate{\mathcal R} & \gate{\mathcal H} & \gate{\mathcal Z} & \gate{\mathcal H} & \qw
	\end{quantikz}
\end{equation}
Here, $\mathcal H$, $\mathcal Z$, and $\mathcal R$ denote sets of Hadamard, C-Z, and $R_z$ gates, respectively. Each block can be compiled independently using the algorithms developed in \Cref{sec:methods} 

In summary, our compilation of QAOA-MaxCut circuits proceeds in two stages: (i) the input graph is decomposed into a collection of star forests; (ii) for each star forest, the associated Hadamard, C-Z, and $R_z$ layers are compiled independently. Star-forest decomposition is performed via a greedy edge-ordering heuristic aligned with the qubit layout, as described in \Cref{sec:sf_deco}. The Hadamard layer at the end of a star-forest subcircuit can be merged with the Hadamard layer at the beginning of the subsequent subcircuit to reduce depth. 
\section{Numerical Experiments}
In this section, we evaluate the performance of the proposed compilation methods through numerical experiments. The results demonstrate both the effectiveness of our algorithm in reducing circuit layers and transport operations, and their practical computational efficiency. 

Our experiments consist of five parts. We first compare our single-qubit gate compilation scheme with a na\"ive row-by-row or column-by-column strategy. We then use small-scale instances to empirically verify the $4/3$-optimality for Pauli gate compilation. Next, we evaluate the proposed scheduling algorithm for C-Z gates. We further apply the full compilation pipeline to QAOA-MaxCut circuits to assess its performance. Finally, we measure the runtime of the compilation procedures for both single-qubit and C-Z gates under different array sizes, illustrating their scalability. 
\subsection{Performance Evaluation of Single-Qubit Gate Compilation}
We evaluate the performance of the proposed single-qubit gate compilation methods through numerical experiments. For each gate family, we generate random pattern matrices $M$ over the corresponding group, with matrix dimensions ranging from $n=10$ to $n=200$. For each dimension $n$, $100$ independent instances are sampled.

To reflect locally correlated gate patterns, the random matrices are generated as follows. We first sample a sparse binary matrix with a prescribed density. This pattern is then smoothed using a Gaussian filter to introduce short-range correlations, followed by thresholding to obtain a binary support. Finally, each nonzero entry is mapped uniformly at random to an element of the target gate set, producing the final pattern matrix. Examples are shown in \Cref{fig:randompattern}.  

\begin{figure}
	\centering
	\includegraphics[width=1.0\linewidth]{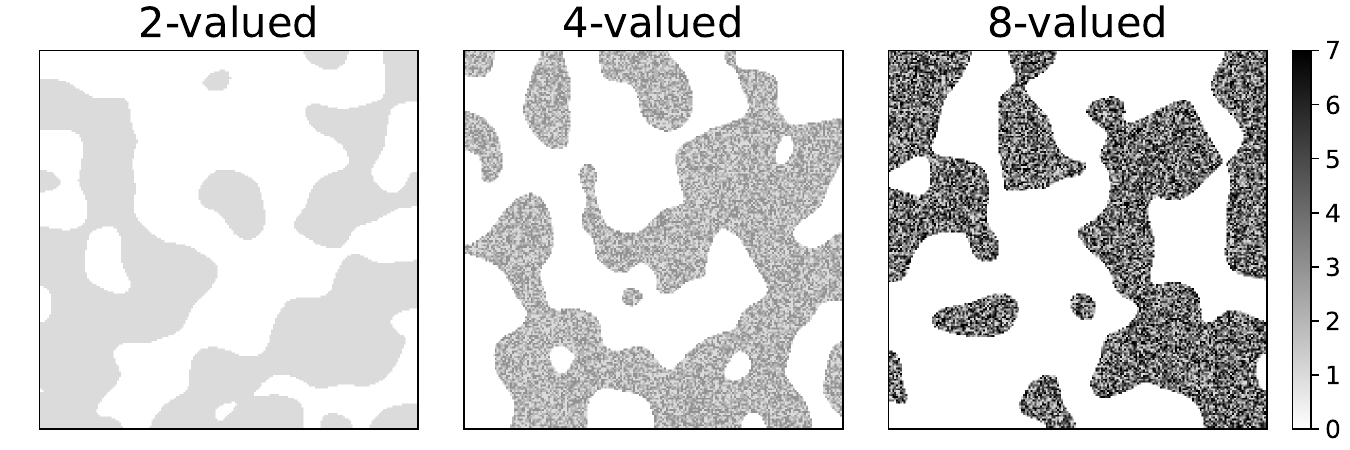}
	\caption{Randomly generated matrix patterns.}
	\label{fig:randompattern}
\end{figure}

We compare our methods with a na\"ive strategy that applies single-qubit gates independently row-by-row or column-by-column. For each instance, we record the total number of required operations. We report the average operation count with one standard deviation, as well as the ratio between our methods and the na\"ive method. We conduct this evaluation for four representative gate families: self-inverse gates, Pauli gates, phase gates, and $\pi/8$ gates. The results are summarized in \Cref{fig:avgrankcomparison}, where the upper panels show the average operation counts and the lower panels show the corresponding ratios.

\begin{figure*}[t]
	\centering
	\includegraphics[width=\textwidth]{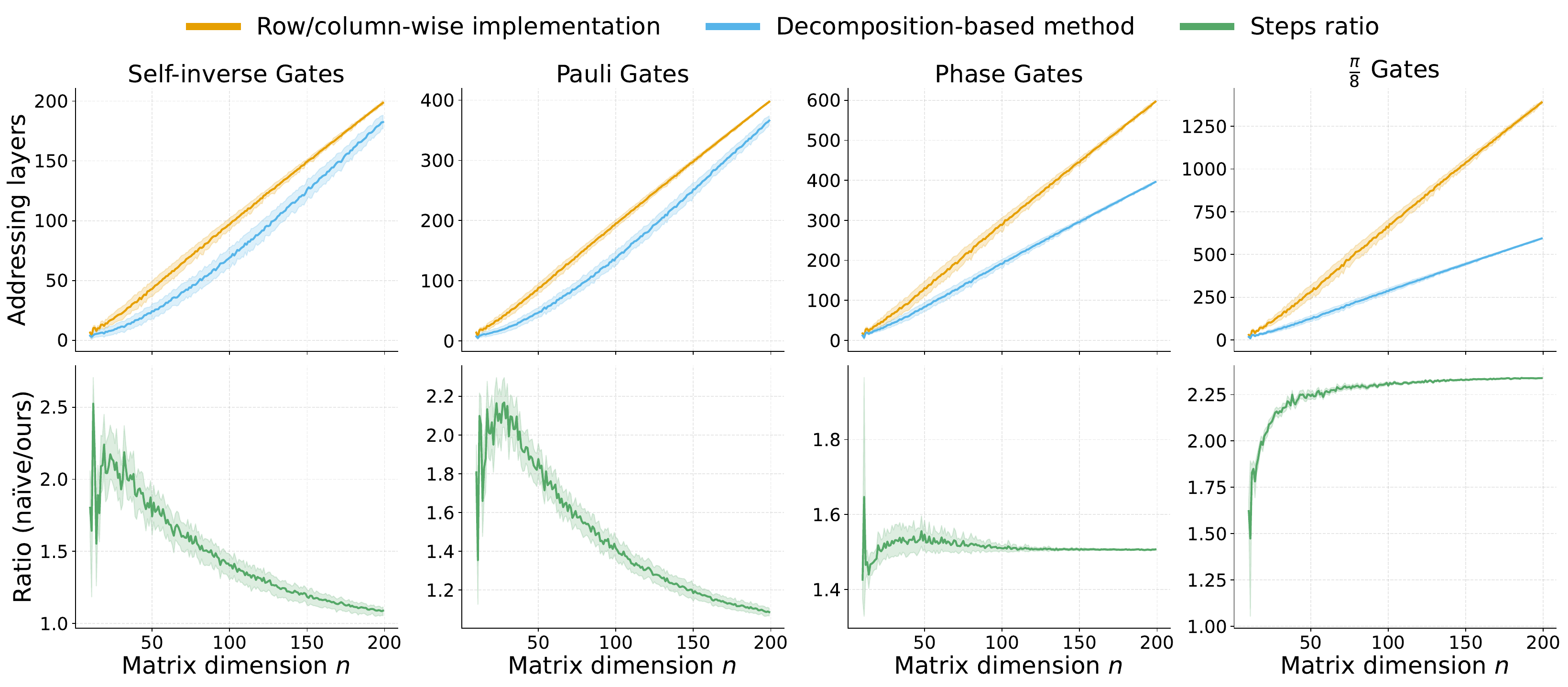}
	\caption{Comparison between our compilation methods and the na\"ive methods for several single-qubit gate families.}
	\label{fig:avgrankcomparison}
\end{figure*}

Across all tested dimensions, our methods outperform the na\"ive approach for self-inverse and Pauli gates, with a diminishing relative gap as the matrix size increases. In contrast, for phase gates and $\pi/8$ gates, the advantage of our methods becomes increasingly pronounced with growing system size.

\subsection{Empirical Verification of the $4/3$-optimality Bound for Pauli Gates}
To empirically validate the theoretical $4/3$-optimality guarantee for Pauli gates compilation, we conduct numerical experiments on small-scale instances. Specifically, for atom arrays of sizes $4\times 4$, $4\times 5$, $5\times 5$, and $5\times 6$, we randomly generate Pauli gates acting on each atom. For each instance, the optimal implementation solution is obtained via exhaustive search and compared against the output of our algorithm by computing the ratio of required operations. 

Since the ratios take values from a small discrete set, we visualize their frequency and cumulative frequency in \Cref{fig:boxplotsratio}. Separate histograms are shown for each array size, illustrating the occurrence of different approximation ratios. Together, these results provide numerical evidence supporting the $4/3$-optimality bound on small instances.
 \begin{figure*}
	\centering
	\includegraphics[width=1.0\linewidth]{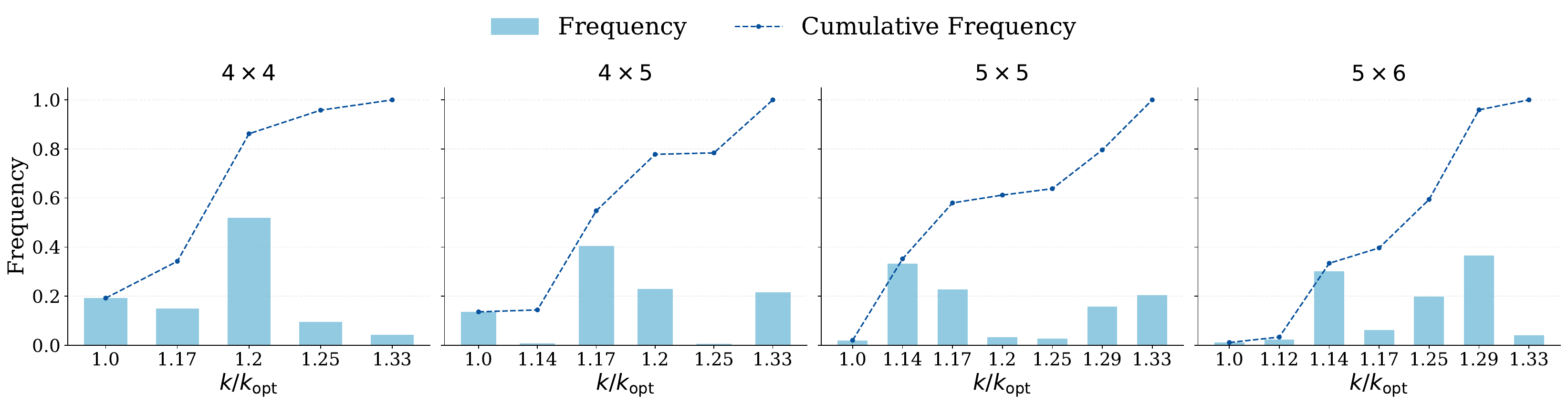}
	\caption{Empirical verification of the $4/3$-optimality bound for Pauli gates. }
	\label{fig:boxplotsratio}
\end{figure*}

\subsection{Performance Evaluation of C-Z Gates Compilation}
In our numerical evaluation of C-Z gate compilation, we consider neutral-atom arrays of size $n\times n$ with $n$ ranging from $10$ to $200$. For each array size, a random set of required C-Z gates is generated by independently including each pair of atoms with probability $8/n^2$. With this choice, the expected number of C-Z gates scales quadratically with the array size $n$, or equivalently linearly with the total number of atoms. This keeps the gate density comparable across different array sizes.
\begin{figure}
	\centering
	\includegraphics[width=1.0\linewidth]{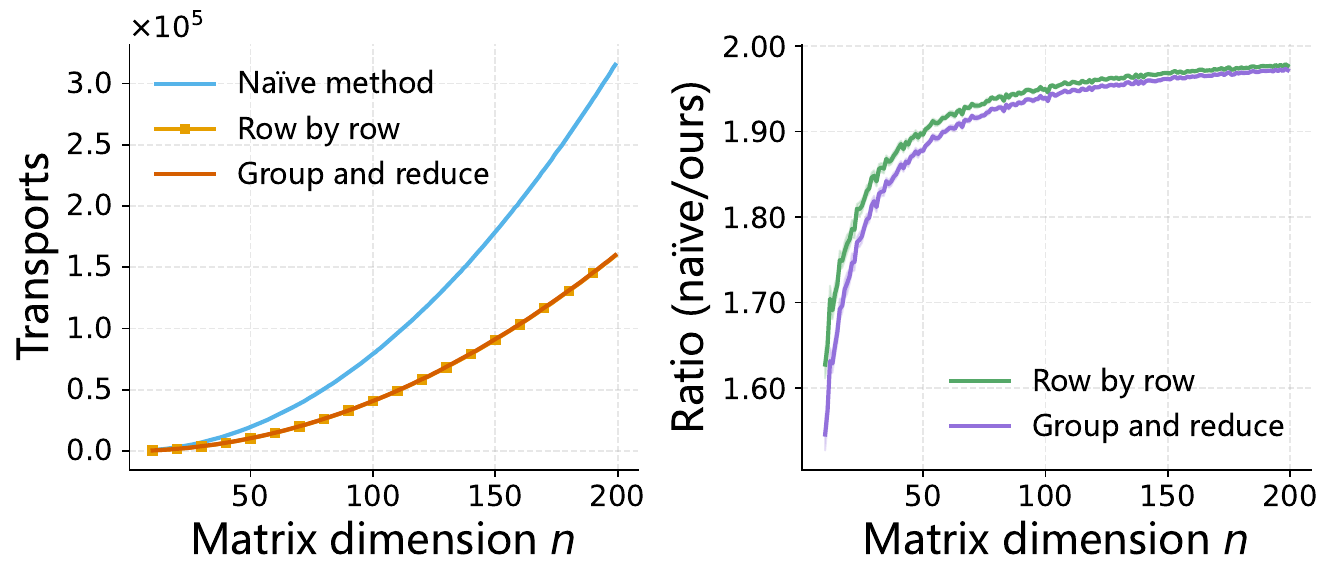}
	\caption{Numerical results of C-Z gates compilation.}
	\label{fig:mcz}
\end{figure}

For each value of $n$, we generate $20$ independent instances and record the average number of transport operations required by our compilation methods, as well as by a na\"{i}ve baseline that implements the required C-Z gates one at a time. The results are summarized in \Cref{fig:mcz}. In particular, we compare the two strategies for compiling row/column-aligned C-Z gates introduced in \Cref{sec:intra_CZ}, labeled as ``row by row" and ``group and reduce" in \Cref{fig:mcz}, respectively. The left panel shows the average number of transports as a function of the array size, while the right panel shows the ratio between the transport cost of the na\"{i}ve baseline and that of our proposed compilation methods. We observe that the na\"{i}ve baseline requires approximately twice as many transport operations as either proposed method, while the two proposed methods exhibit nearly identical performance. A more detailed characterization of the regimes in which each strategy offers a relative advantage is an interesting direction for future work.

\subsection{Performance Evaluation of QAOA-MaxCut Compilation}
To evaluate the performance of our compilation method on structured circuits, we consider QAOA circuits for MaxCut on random graphs. We fix the neutral-atom array size to $30\times 30$ and generate Erdős–Rényi graphs with the number of vertices ranging from $30$ to $500$ and edge probability $\rho = 0.05$. For each graph size, we generate $50$ independent instances to mitigate statistical fluctuations. 

\begin{figure}
	\centering
	\includegraphics[width=\linewidth]{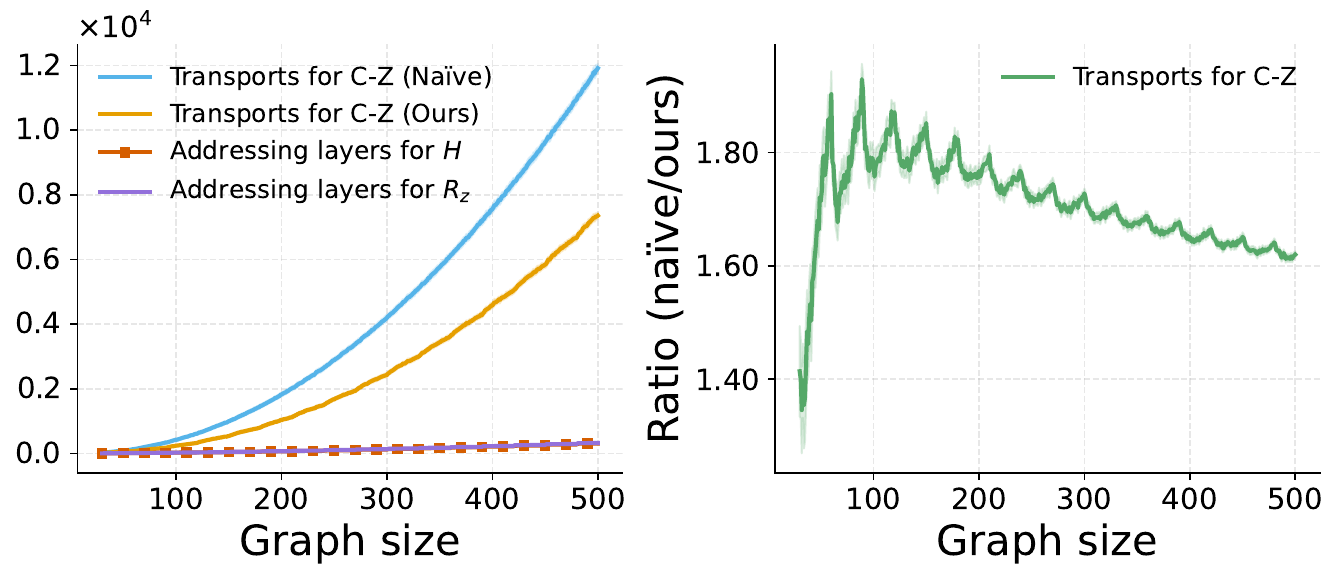}
	\caption{QAOA-MaxCut compilation performance}
	\label{fig:qaoacompile}
\end{figure}

For each instance, we record the total number of atom transports required for implementing C-Z gates, as well as the number of addressing layers required for Hadamard gates and $R_z$ rotations, as produced by our compilation method. For comparison, we also record the corresponding number of atom transports for C-Z gates under the na\"{i}ve method. We report the mean and standard deviation of these quantities over all trials. The results are summarized in \Cref{fig:qaoacompile}. The left panel shows the average number of transports or addressing layers as a function of the graph size, while the right panel shows the ratio between the transport cost of the na\"{i}ve baseline and that of our proposed compilation method. We observe that the overall circuit depth is dominated by the implementation of C-Z gates, and that our compilation method reduces the C-Z transport cost by more than $30\%$ compared with a na\"{i}ve sequential implementation of C-Z gates.

\subsection{Compilation Time Evaluation}
We further evaluate the classical runtime of our compilation algorithms. We consider atom arrays with linear sizes $20, 40, 60, 80, 100, 120, 140, 160, 180$, and $200$, where the array size denotes the array dimension rather than the total number of atoms. For each size, we examine five representative gate families: self-inverse gates, Pauli gates, phase gates, $\pi/8$ gates, and C-Z gates. For each combination of array size and gate family, we generate $100$ independent random instances using the same generation procedure as in the previous experiments. Our compilation methods are applied to each instance, and the compilation time is recorded. We report the mean compilation time over the $100$ samples for each setting, thereby characterizing the scaling behavior of the classical compilation cost across different gate families and system sizes.  
\begin{figure}
	\centering
	\includegraphics[width=1.0\linewidth]{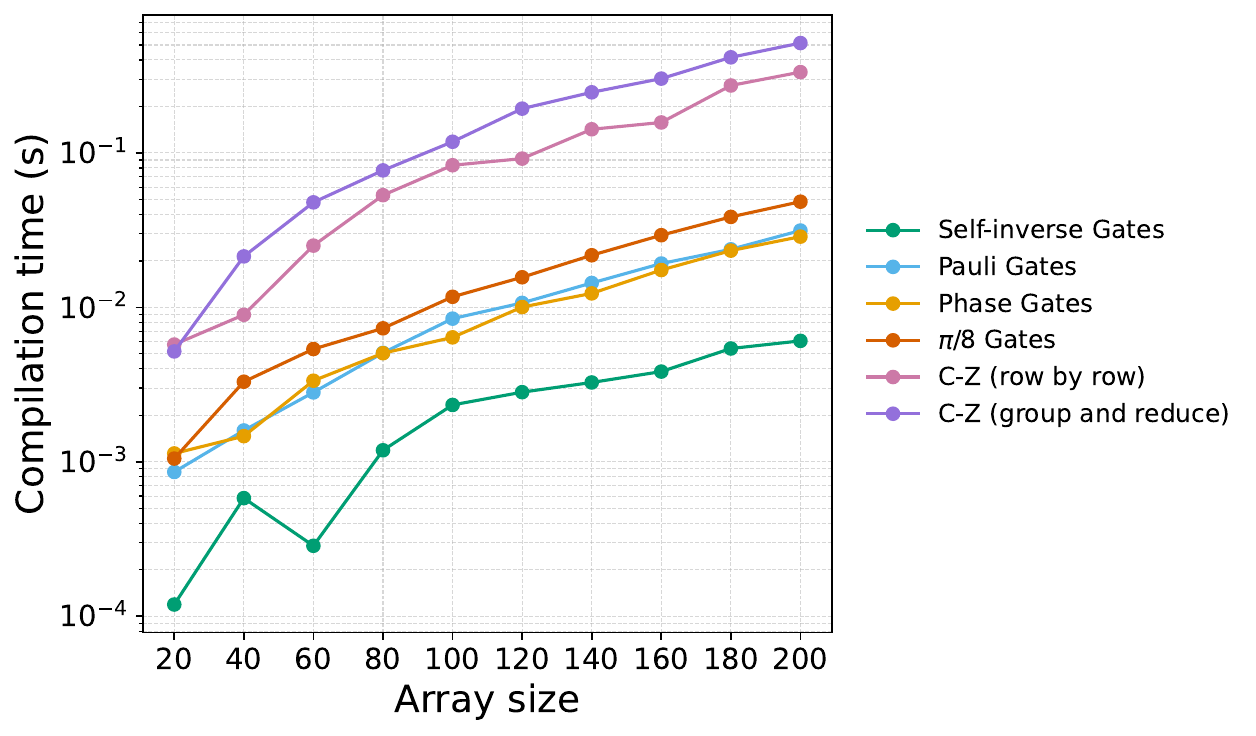}
	\caption{Compilation time evaluation.}
	\label{fig:timeevaluation}
\end{figure}

All experiments were performed on a standard desktop workstation equipped with a 12th Gen Intel Core i7-12700H CPU (2.30 GHz) and 16 GB of RAM, running a 64-bit operating system on an x64 architecture. The results are summarized in \Cref{fig:timeevaluation}. Each curve shows the average compilation time as a function of the atom array size for the compilation method for a fixed gate family. All compilation tasks complete within one second across the tested array sizes, indicating that the proposed methods are computationally efficient and well suited for near-term neutral-atom array scales. 

\section{Conclusion}
In this work, we have presented a comprehensive compilation framework for structured quantum gate families on neutral-atom arrays. For single-qubit gates, we exploit the algebraic structures of gate families at the matrix level, enabling low-rank decompositions over appropriate groups and reducing the number of physical operations. For C-Z gates, we map the transport and adjacency constraints to classical graph-theoretic problems, allowing efficient scheduling using standard algorithms. 

We demonstrated the effectiveness of our methods through extensive numerical experiments. Across representative single-qubit gate families, our compilation methods reduce the number of addressing layers by up to a factor of two compared with naïve row- or column-wise implementations, and small-scale tests confirm the theoretical $\tfrac{4}{3}$-optimality bound for Pauli gates. For C-Z gates, the proposed scheduling strategy reduces the required number of atom transport operations by approximately 50\%. When applied to QAOA circuits for MaxCut, the complete framework reduces the C-Z transport cost by more than 30\% on average. In addition, our timing benchmarks show that all compilation tasks complete within one second fofr the tested array sizes, indicating that the proposed methods are computationally efficient at near-term neutral-atom scales.


These results indicate that our compilation techniques can efficiently handle realistic neutral-atom arrays within feasible runtime limits. Looking forward, our approach provides a foundation for optimizing larger-scale quantum circuits and can be extended to other gate families and quantum algorithms, facilitating more practical and scalable implementations on near-term neutral-atom quantum processors. Our evaluation focuses on addressing layers and atom transports as direct measures of scheduling and shuttling overhead. In future work, it would be useful to incorporate more comprehensive hardware-aware metrics, such as fidelity, decoherence, movement duration, and approximate success probability, as suggested by recent unified evaluation frameworks for neutral-atom compilers~\cite{khusainov2026practicalinsightsfaircomparison}.
\appendix
\section{Auxiliary Lemmas and Proofs}\label{sec:lems&pfs}
\begin{lem}[Lemma 8 in \cite{song2025rank}]\label{lem:mn2_rank}
	Let $\mathcal T = [X_1|X_2]$ be an $m\times n \times 2$ tensor over $\GF(2)$. Then 
	\[\rank(\mathcal T) \ge \frac{\rank(X_1) + \rank(X_2) + \rank(X_1 + X_2)}{2}.\]
\end{lem}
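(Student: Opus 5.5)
The plan is to prove the bound by a counting argument on an optimal decomposition. Write $\mathcal T=\sum_{t=1}^{r} a_t\otimes b_t\otimes c_t$ with $r=\rank(\mathcal T)$, where $a_t\in\GF(2)^m$, $b_t\in\GF(2)^n$ and $c_t\in\GF(2)^2$. We may assume every $c_t$ is nonzero, since terms with $c_t=0$ can be dropped. Then each $c_t$ lies in $\{(1,0),(0,1),(1,1)\}$. Let $r_1,r_2,r_3$ be the numbers of terms whose third factor is $(1,0)$, $(0,1)$ and $(1,1)$, respectively, so that $r=r_1+r_2+r_3$.

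Reading off the slices gives
\[
X_1=\sum_{c_t\in\{(1,0),(1,1)\}} a_tb_t^\top,\qquad X_2=\sum_{c_t\in\{(0,1),(1,1)\}} a_tb_t^\top .
\]
Adding these over $\GF(2)$, the $(1,1)$ terms cancel, so
\[
X_1+X_2=\sum_{c_t\in\{(1,0),(0,1)\}} a_tb_t^\top .
\]
Since the matrix rank is subadditive and each $a_tb_t^\top$ has rank at most one, we get
\[
\rank(X_1)\le r_1+r_3,\qquad \rank(X_2)\le r_2+r_3,\qquad \rank(X_1+X_2)\le r_1+r_2 .
\]
Summing the three inequalities gives $\rank(X_1)+\rank(X_2)+\rank(X_1+X_2)\le 2(r_1+r_2+r_3)=2\rank(\mathcal T)$, which is the claim.

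The only point needing care is the bookkeeping. One must check that each nonzero vector of $\GF(2)^2$ contributes to exactly two of the three matrices $X_1$, $X_2$, $X_1+X_2$. This holds because these three matrices are precisely the "evaluations" of $\mathcal T$ along the three nonzero linear functionals on $\GF(2)^2$. Every nonzero $c_t$ is annihilated by exactly one of these functionals, so its term survives in the other two. No deeper obstacle is expected.
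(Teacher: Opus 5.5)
Your proof is correct and is essentially the paper's argument: both fix an optimal decomposition with $r$ shared rank-one terms and count which of those terms contribute to $X_1$, $X_2$, and $X_1+X_2$. The paper writes this with diagonal coefficients $(\lambda_i)$, $(\mu_i)$ and uses an inclusion--exclusion step to get $\rank(X_1+X_2)\le 2r-\rank(X_1)-\rank(X_2)$; you instead add three symmetric subadditivity bounds, which is a slightly cleaner way of doing the same count.
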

\begin{proof}
	In this proof, addition in $\GF(2)$ is denoted by $\oplus$.	Let $\rank(\mathcal T) = r$, $\rank(X_1) = r_1$, $\rank(X_2) = r_2$, and $\rank(X_1 + X_2) = r_3$. By definition of tensor rank, we have $r_1, r_2\le r \le r_1 + r_2$. Since $\rank(\mathcal T) = r$, there exist matrices $G\in \GF(2)^{m\times r}$, $F\in \GF(2)^{r\times n}$, and $\lambda_1, \dots, \lambda_r$, $\mu_1, \dots, \mu_r\in \GF(2)$ such that 
	\[X_1 = G \begin{bmatrix}\lambda_1\\ & \lambda_2\\ && \ddots \\ &&& \lambda_r\end{bmatrix}F, \quad X_2 = G \begin{bmatrix}\mu_1\\ & \mu_2\\ && \ddots \\ &&& \mu_r\end{bmatrix}F.\]
	By $r_1, r_2\le r$, there are at least $r_1$ ones among $\{\lambda_i\}$ and at least $r_2$ ones among $\{\mu_i\}$. Furthermore, since $r\le r_1 + r_2$, at least $r_1 + r_2 - r$ entries of $\lambda_i \oplus \mu_i$ are zero, implying that at most $2r - (r_1 + r_2)$ entries are one. It then follows that 
	\[X_1 + X_2 = G \begin{bmatrix}\lambda_1 \oplus \mu_1\\ & \lambda_2 \oplus \mu_2\\ && \ddots \\ &&& \lambda_r \oplus \mu_r\end{bmatrix}F,\]
	and hence
	\[r_3 \le 2r - (r_1 + r_2).\]
	Rearranging gives 
	\[2r \ge r_1 + r_2 + r_3,\]
	which concludes the proof.  
\end{proof}

\begin{proof}[Proof of \Cref{lem:rho_rank}]
	There exist invertible matrices $S$ and $T$ over $\GF(2)$ such that $SNT$ is in Smith normal form, namely, 
	\[SNT = \begin{bmatrix}I_r&0\\0&0\end{bmatrix},\]
	where $I_r$ denotes the $r\times r$ identity matrix. Hence, the $\GF(2)$-rank of $N$ is $r$. 
	
	Consider any elementary row transformation matrix $L$ over $\GF(2)$, which must be of the form
	\[L = \begin{bmatrix}
		1\\&\ddots\\ && 1 \\ && v_1& 1\\ && \vdots&&\ddots\\ &&v_t&&&1 
	\end{bmatrix}, \quad \text{where }\; \begin{bmatrix}v_1\\\vdots \\ v_t\end{bmatrix} \in \{0, 1\}^t. \]
	Such a matrix is also invertible over the ring $\mathbb Z_4$. Its inverse is given by
	\[L^{-1} = \begin{bmatrix}
		1\\&\ddots\\ && 1 \\ && \widetilde v_1& 1\\ && \vdots&&\ddots\\ &&\widetilde v_t&&&1 
	\end{bmatrix},\]
	where 
	\[\widetilde v_i = \begin{cases}
		3, & v_i = 1,\\
		0, & v_i = 0,
	\end{cases}\quad \forall i = 1, \dots, t.\]
	Similarly, any elementary column transformation over $\GF(2)$ and any permutation matrix remain invertible over $\mathbb Z_4$. Since the matrices $S$ and $T$ are products of permutation matrices and elementary row or column transformations over $\GF(2)$, they are therefore invertible over $\mathbb Z_4$ as well. 
	
	Applying the row transformation $S$ and column transformation $T$ to $M$ over the ring $\mathbb Z_4$, we obtain 
	\begin{equation}\label{eq:SAT}
		SMT = \begin{bmatrix}D_r& 0\\ 0 & 0\end{bmatrix}, \quad \text{where }\; D_r = I_r \cdot 2.
	\end{equation}
	Since any decomposition of $M$ of the form \eqref{eq:rho_A} induces a decomposition of $SMT$ with the same number of terms, and vice versa due to the invertibility of $S$ and $T$, we have 
	\[\rho_4(M) = \rho_4(SMT). \]
	The diagonal form in \eqref{eq:SAT} immediately yields a decomposition with $r$ rank-one terms, implying
	\[\rho_4(M) = \rho_4(SMT) \le r. \]
	On the other hand, since $D_r$ is a submatrix of $SMT$, we have 
	\begin{equation}\label{eq:ASATD_r}
		\rho_4(M) = \rho_4(SMT) \ge \rho(D_r). 
	\end{equation}
	It therefore suffices to show that $\rho_4(D_r) = r$.  
	
	Indeed, since 
	\[D_r = \sum_{t = 1}^r \left(e_i e_i^\top\right) \cdot 2,\]
	we have $\rho_4(D_r) \le r$. We prove the reverse inequality by induction on $r$. 
	\begin{itemize}
		\item First, consider the case $r = 2$. Suppose $\rho_4(D_2) = 1$. Then there exist $a, b, c, d\in \mathbb Z_4$ such that 
		\[\begin{bmatrix}
			a\\b
		\end{bmatrix}\begin{bmatrix}c & d\end{bmatrix} = \begin{bmatrix}2&0\\0&2\end{bmatrix},\]
		which implies
		\[  ac = 2,\quad
		ad = 0, \quad
		bc = 0, \quad bd = 2.\]
		From the first and fourth equations, we conclude that $a, b, c, d\not = 0$. The second and third equations then force $a = b = c = d = 2$, which contradicts $ac = 2$ and $bd = 2$. Hence, $\rho(D_2) \ge 2$. 
		\item Next, assume that $\rho_4(D_r) \ge r$ but $\rho_4(D_{r+1}) \le r$. Then there exists a decomposition 
		\[D_{r+1} = \sum_{t=1}^r u_t v_t^\top, \quad u_t, v_t \in \mathbb Z^{r+1}_4. \]
		Write 
		\[u_t = \begin{bmatrix}u_{t, 0}\\ \widehat u_t\end{bmatrix},\quad v_t = \begin{bmatrix}v_{t,0} \\ \widehat v_t\end{bmatrix}, \quad  t = 1, \dots, r. \]
		Then 
		\begin{equation}\label{eq:2_D_r}
			\sum_{t=1}^r \begin{bmatrix}
				u_{t, 0}v_{t, 0}& u_{t, 0} \widehat v_t^\top\\
				v_{t, 0}\widehat u_t & \widehat u_t \widehat v_t^\top
			\end{bmatrix} = \begin{bmatrix}2 & 0\\ 0 & D_r\end{bmatrix}.
		\end{equation}
		Since $\sum_{t=1}^r u_{t, 0} v_{t, 0} = 2 \not = 0$, there must be an invertible element among $\{u_{t, 0}, v_{t, 0}\}_{t=1}^r$. Without loss of generality, assume $u_{t_0, 0}$ is invertible. From \eqref{eq:2_D_r}, we have 
		\[\sum_{t=1}^r u_{t, 0} \widehat v_t = 0,\]
		which implies
		\[\widehat v_{t_0} = - u_{t_0, 0}^{-1} \sum_{t\not = t_0} u_{t, 0}\widehat v_t. \]
		Substituting this into \eqref{eq:2_D_r} yields
		\[\begin{aligned}
			D_r &= \sum_{t=1}^r \widehat u_t \widehat v_t^\top\\
			&= \sum_{t\not = t_0} \widehat u_t \widehat v_t^\top + \widehat u_{t_0} \widehat v_{t_0}^\top\\
			&= \sum_{t\not = t_0} \widehat u_t \widehat v_t^\top - \sum_{t\not = t_0} u_{t_0, 0}^{-1} u_{t, 0} \widehat u_{t_0} \widehat v_{t}^\top\\
			&= \sum_{t\not = t_0} \left(\widehat u_t - u_{t_0, 0}^{-1} u_{t, 0} \widehat u_{t_0}\right) \widehat v_t^\top, 
		\end{aligned}\]
		which expresses $D_r$ as a sum of at most $r - 1$ rank-one terms, contradicting the induction hypothesis $\rho_4(D_r)\ge r$. Therefore, $\rho_4(D_{r+1}) \ge r + 1$. 
	\end{itemize}
	By induction, we conclude that $\rho_4(D_r)\ge r$ for all $r\ge 2$. Combined with the upper bound, this yields $\rho_4(D_r) = r$, and consequently $\rho_4(M) = r$, completing the proof. 
\end{proof}
\section{Star-Forest Decomposition}\label{sec:sf_deco}
In our implementation, we employ a simple greedy algorithm to decompose the graph into a collection of star forests. The algorithm iteratively assigns edges to existing groups whenever the star-forest property is preserved; if not, a new group is created.

A key design choice is the order in which edges are processed. Rather than traversing edges arbitrarily, edges are processed in lexicographic order of their endpoints, sorted increasingly first by the smaller endpoint and then by the larger one. This ordering is motivated by the physical layout of neutral-atom arrays, where graph vertices are mapped to qubits row by row according to their indices. By prioritizing edges in this way, the resulting star forests tend to involve qubits concentrated within adjacent rows, which can reduce the number of transport operations required for implementing C-Z gates. 

We emphasize that this ordering does not affect the correctness of the star-forest decomposition; it is purely a heuristic to improve compilation efficiency under hardware constraints. Any valid star-forest decomposition can be used in our compilation framework, and the strategy described here is one concrete choice that performs well in practice. 

\clearpage
\bibliographystyle{apsrev4-2}
\bibliography{refs}

\end{document}